\documentclass[10pt,conference,letter]{IEEEtran}

\usepackage{times}

\usepackage{amssymb}
\usepackage{amsthm}
\usepackage{amsxtra}
\usepackage{amsmath}

\usepackage{verbatim}
\usepackage{epsfig}
\usepackage{setspace}
\usepackage{caption}
\usepackage{subcaption}
\usepackage{breqn}

\allowdisplaybreaks

\newtheorem{theorem}{Theorem}
\newtheorem{definition}{Definition}
\newtheorem{lemma}[theorem]{Lemma}
\newtheorem{proposition}[theorem]{Proposition}

\newtheorem{example}{Example}[section]

\def\CI{\text{CI}}
\def\LCI{\text{LCI}}
\def\CO{\text{CO}}
\def\SK{\text{SK}}
\def\rank{\text{rank}}

\def\cB{\mathcal{B}}
\def\cE{\mathcal{E}}
\def\cF{\mathcal{F}}
\def\cG{\mathcal{G}}
\def\cM{\mathcal{M}}
\def\cV{\mathcal{V}}
\def\cX{\mathcal{X}}

\def\F{\mathbb{F}}
\def\N{\mathbb{N}}

\title{On the Communication Complexity of Secret Key Generation in the Multiterminal Source Model}
\author{
\IEEEauthorblockN{Manuj Mukherjee$^\dag$} \and \IEEEauthorblockN{Navin Kashyap$^\dag$}
}

\begin{document}

\maketitle

\renewcommand{\thefootnote}{}
\footnotetext{
\noindent $^\dag$M.\ Mukherjee and N.\ Kashyap are with the 
Department of Electrical Communication Engineering, 
Indian Institute of Science, Bangalore. Email: \{manuj,nkashyap\}@ece.iisc.ernet.in.

\smallskip
}

\begin{abstract}
Communication complexity refers to the minimum rate of public communication required for generating a maximal-rate secret key (SK) in the multiterminal source model of Csisz{\'a}r and Narayan. Tyagi recently characterized this communication complexity for a two-terminal system. We extend the ideas in Tyagi's work to derive a lower bound on communication complexity in the general multiterminal setting. In the important special case of the complete graph pairwise independent network (PIN) model, our bound allows us to determine the exact linear communication complexity, i.e., the communication complexity when the communication and SK are restricted to be linear functions of the randomness available at the terminals.
\end{abstract}

\renewcommand{\thefootnote}{\arabic{footnote}}

\section{Introduction} \label{sec:intro}
Csisz{\'a}r and Narayan \cite{CN04} introduced the problem of secret key (SK) generation within the multiterminal source model. In this model, there are multiple terminals, each of which observes a distinct component of a source of correlated randomness. The terminals must agree on a shared SK by communicating over a noiseless public channel. This key is to be protected from a passive eavesdropper having access to the public communication. Various equivalent characterizations of the \emph{SK capacity}, i.e., the supremum of the rates of SKs that can be generated within this model, are now known \cite{CN04}, \cite{CZ10}, \cite{NN10}. Proofs of achievability of the SK capacity typically involve communication protocols that enable ``omniscience'' at all terminals, which means that the communication over the public channel allows each terminal to recover the observations of all the other terminals. On the other hand, it is known (see remark following Theorem~1 in \cite{CN04}) that omniscience is not necessary for maximal-rate SK generation. Thus, communication enabling omniscience may be wasteful in terms of rate. In this paper, we are concerned with the problem of determining the \emph{communication complexity} of achieving SK capacity, i.e., the minimum rate of communication required to generate a maximal-rate SK. 

In the case when there are only two terminals in the model, Tyagi gave an exact characterization of the communication complexity \cite[Theorem~3]{Tyagi13} in terms of the minimum rate of an ``interactive common information'', a type of Wyner common information \cite{Wyner75}. We extend the main ideas of Tyagi's work to the general setting of $m \ge 2$ terminals, and obtain a lower bound on the communication complexity of SK capacity. While we can show that our bound is always non-negative, evaluating the bound seems to be difficult even in well-studied special cases like the pairwise independent network (PIN) model of \cite{NN10}. 

In the PIN model, Nitinawarat and Narayan \cite{NN10} have shown that a maximal-rate SK can be generated by a protocol in which the public communication and the SK generated are both linear functions of the observations of the terminals\footnote{Indeed, the protocol given in the proof of \cite[Theorem 1]{NN10} to obtain a maximal-rate SK uses public communication that is a linear function of the terminals' observations. Though it is not explicitly stated in the proof, it is easy to see that the SK function can also be chosen to be linear.\label{fn:linSK}}. We can then define the \emph{linear communication complexity} of achieving SK capacity as the minimum rate of communication required when the communication and the SK are restricted to be linear functions of the observations. An appropriately modified version of our lower bound applies in this linear setting. We are able to explicitly evaluate our bound in the particular case of the complete graph PIN model. The SK-capacity-achieving protocol in the proof of \cite[Theorem~1]{NN10} uses a linear communication that enables omniscience at all terminals; the rate of this communication is an upper bound on the linear communication complexity. For the complete graph PIN model on $m \ge 2$ terminals, our lower bound meets this upper bound: the linear communication complexity in this case equals $m(m-2)/2$. This exact result in an important special case is a testament to the power of our lower bounding method.

The rest of the paper is structured as follows. Section~\ref{sec:prelim} presents the required definitions and notation. Section~\ref{sec:main} describes our lower bound on the communication complexity of achieving SK capacity. In Section~\ref{sec:linear}, we adapt our bound to the linear setting and evaluate it for the complete graph PIN model. The paper concludes with some remarks in Section~\ref{sec:conc}.

\section{Preliminaries} \label{sec:prelim}
 Throughout, we use $\N$ to denote the set of positive integers. Consider a set of $m$ terminals denoted by $\mathcal{M}=\{1,2, \ldots, m\}$. Each terminal $i \in \mathcal{M}$ observes $n$ i.i.d.\ repetitions of the random variable $X_i$ taking values in the finite set $\mathcal{X}_i$. The $n$ i.i.d.\ copies of the random variable are denoted by $X_i^n$. For any subset $A\subseteq \mathcal{M}$, $X_A$ and $X_A^n$ denote the collections of random variables $(X_i:i \in A)$ and $(X_i^n: i \in A)$, respectively. The terminals communicate through a noiseless public channel, 
any communication sent through which is accessible to all terminals and to potential eavesdroppers as well.
An \emph{$r$-interactive communication} is a communication $\textbf{f}=(f_1,f_2,\cdots,f_r)$ consisting of $r$ transmissions. Any transmission sent by the $i$th terminal is a deterministic function of $X_i^n$ and all the previous communication, i.e., if terminal $i$ transmits $f_j$, then $f_j$ is a function only of $X_i^n$ and $f_1,\ldots,f_{j-1}$.   We denote the random variable associated with $\textbf{f}$ by $\textbf{F}$; the support of $\textbf{F}$ is a finite set $\cF$. The rate of the communication $\textbf{F}$ is defined as $\frac{1}{n}\log|\cF|$. Note that $\textbf{f}$, $\textbf{F}$ and $\cF$ implicitly depend on $n$.

\begin{definition}
\label{def:CR}
A \emph{common randomness (CR)} obtained from an $r$-interactive communication $\textbf{F}$ is a sequence of random variables $\textbf{J}^{(n)}$, $n\in\N$, which are functions of $X_{\mathcal{M}}^n$, such that for any $0<\epsilon<1$ and for all sufficiently large $n$, there exist $J_i=J_i(X_i^n,\textbf{F})$, $i = 1,2,\ldots,m$, satisfying $Pr\{J_1=J_2=\cdots=J_m=\textbf{J}^{(n)}\}\geq 1-\epsilon$.
\end{definition}

\begin{definition}
\label{def:SK}
A real number $R\geq 0$ is an \emph{achievable SK rate} if there exists a CR $\textbf{K}^{(n)}$, $n \in \N$, obtained from an $r$-interactive communication $\textbf{F}$ satisfying, for any $\epsilon > 0$ and for all sufficiently large $n$, $I(\textbf{K}^{(n)};\textbf{F})\leq \epsilon$ and $\frac{1}{n}H(\textbf{K}^{(n)}) \geq R-\epsilon$.\footnote{Usually, an additional requirement that $\textbf{K}^{(n)}$ be almost uniformly distributed over its alphabet $\mathcal{K}^{(n)}$, i.e., $H(\textbf{K}^{(n)}) \ge \log |\mathcal{K}^{(n)}| - \epsilon$, is also included in this definition. However, this can always be dropped without affecting SK capacity or communication rates --- see e.g., \cite[p.~3976]{GA10}.} The \emph{SK capacity} is defined to be the supremum among all achievable rates.  The CR $\textbf{K}^{(n)}$ is called a \emph{secret key (SK)}. 
\end{definition}

From now on, we will drop the superscript $(n)$ from both $\textbf{J}^{(n)}$ and $\textbf{K}^{(n)}$ to keep the notation simple. 

The SK capacity can be expressed as \cite[Section~V]{CN04}, \cite{CZ10}
\begin{equation}
\textbf{I}(X_{\mathcal{M}}) \triangleq H(X_{\mathcal{M}})-\max_{\lambda \in \Lambda} \sum_{B \in \mathcal{B}} \lambda_B H(X_B|X_{B^c}) \label{skcapacity}
\end{equation}
where $\mathcal{B}$ is the set of non-empty, proper subsets of $\mathcal{M}$ and $\lambda=(\lambda_B: B\in \mathcal{B})\in \Lambda$ iff $\lambda_B\geq 0$ for all $B\in \mathcal{B}$ and for all $i\in \mathcal{M}$, $\sum_{B:i\in B}\lambda_B=1$. It is a fact that $\textbf{I}(X_{\mathcal{M}}) \ge 0$ \cite[Proposition~II]{MT10}. From now on, we will denote the optimal $\lambda\in\Lambda$ for the linear program in (\ref{skcapacity}) by $\lambda^*$.

We are now in a position to make the notion of communication complexity rigorous.

\begin{definition}
\label{def:RSKr}
Let $r\geq m$ be fixed. A real number $R\geq 0$ is said to be an \emph{achievable rate of $r$-interactive communication for maximal-rate SK} if for all $\epsilon > 0$ and for all sufficiently large $n$, there exist \emph{(i)}~an $r$-interactive communication $\textbf{F}$ satisfying $\frac{1}{n}\log|\cF| \; \leq R+\epsilon$, and \emph{(ii)}~an SK $\textbf{K}$ obtained from $\textbf{F}$ such that $\frac{1}{n}H(\textbf{K})\geq \textbf{I}(X_{\mathcal{M}})-\epsilon$.

We denote the infimum among all such achievable rates by $R_{\SK}^r$.
\end{definition}

The $r \ge m$ condition in the above definition requires a note of explanation. The proof of Theorem~1 in \cite{CN04} shows that there exists an $m$-interactive communication $\textbf{F}$ that enables omniscience at all terminals and from which a maximal-rate SK can be obtained. Thus, for $r \ge m$, we have $R_{\SK}^r < \infty$. Another point to be noted is that $R_{\SK}^r$ is a non-increasing function of $r$, since any rate achievable with $r$ transmissions is also achievable with $r+1$ transmissions (by, say, keeping the last transmission silent). Hence, we can define 
\begin{equation}
R_{\SK} \triangleq \lim_{r \to \infty} R_{\SK}^r
\label{def:RSK}
\end{equation} 
to be the communication complexity of generating a maximal-rate SK. 


Tyagi gave a characterization of $R_{\SK}$ in the case of a two-terminal model \cite[Theorem~3]{Tyagi13}.\footnote{It should be clarified that Tyagi's characterization works only for ``weak'' SKs, which are defined as in our Definition~\ref{def:SK}, except that the condition $I(\textbf{K};\textbf{F}) \le \epsilon$ is weakened to $\frac{1}{n} I(\textbf{K};\textbf{F}) \le \epsilon$.  Using our definitions, Tyagi's arguments would only yield a two-terminal analogue of our Theorem~\ref{th:main}.\label{fn:weakSK}} The key to his characterization was the observation that conditioned on a maximal-rate SK $\textbf{K}$ and the communication $\textbf{F}$ from which $\textbf{K}$ is extracted, the observations of the two terminals are ``almost'' independent: $\frac{1}{n}I(X_1^n;X_2^n \mid \textbf{K},\textbf{F}) \to 0$ as $n \to \infty$. Thus, the pair $(\textbf{K},\textbf{F})$ is a Wyner common information \cite{Wyner75} for the randomness at the terminals. Tyagi used the term ``interactive common information'' to denote any Wyner common information that consisted of a CR along with the communication achieving it. We now extend these definitions to the multiterminal setting.

We will need the following extension of the definition of $\textbf{I}(X_{\mathcal{M}})$ given in \eqref{skcapacity}: for any random variable $\textbf{L}$, and any $n \in \N$, we define
\begin{equation}
\textbf{I}(X_{\mathcal{M}}^n | \textbf{L}) \triangleq H(X_{\mathcal{M}}^n | \textbf{L})-\sum_{B \in \mathcal{B}} \lambda^*_B H(X_B^n|X_{B^c}^n,\textbf{L}), \label{cmi}
\end{equation}
where $\lambda^* = (\lambda_B^*: B \in \mathcal{B})$ is the optimal $\lambda \in \Lambda$ for the linear program in the definition of $\textbf{I}(X_{\mathcal{M}})$ in \eqref{skcapacity}. It follows from Proposition~II in \cite{MT10} that $\textbf{I}(X_{\mathcal{M}}^n | \textbf{L}) \ge 0$. Also, note that $\textbf{I}(X_{\cM}^n) = n\textbf{I}(X_{\cM})$. 

\begin{definition}
\label{def:CI}
A \emph{(multiterminal) Wyner common information ($\CI_W$)} for $X_{\mathcal{M}}$ is a sequence of finite-valued functions $\textbf{L}^{(n)}=\textbf{L}^{(n)}(X_{\mathcal{M}}^n)$ such that $\frac{1}{n}\textbf{I}(X_{\mathcal{M}}^n|\textbf{L}^{(n)}) \to 0$ as $n \to \infty$. An \emph{$r$-interactive common information ($\CI^r$)} for $X_{\mathcal{M}}$ is a Wyner common information of the form $\textbf{L}^{(n)} = (\textbf{J},\textbf{F})$, where $\textbf{F}$ is an $r$-interactive communication and $\textbf{J}$ is a CR obtained from $\textbf{F}$. 
\end{definition}

Again, we shall drop the superscript $(n)$ from $\textbf{L}^{(n)}$ for notational simplicity. Wyner common informations $\textbf{L}$ do exist: for example, the identity map $\textbf{L}=X_{\mathcal{M}}^n$ is a $\CI_W$. 
To see that $\CI^r$s $(\textbf{J},\textbf{F})$ also exist, observe that $\textbf{J}=X_{\mathcal{M}}^n$ and a communication $\textbf{F}$ enabling omniscience constitute a $\CI_W$, and hence, a $\CI^r$.
The proof of \cite[Theorem 1]{CN04} shows that there exists a communication of $m$ transmissions that enables omniscience. It follows that a $\CI^r$ exists for any $r \ge m$. 


\begin{definition}
\label{def:CIrate}
A real number $R\geq 0$ is an \emph{achievable $\CI_W$ (resp.\ $\CI^r$) rate} if there exists a $\CI_W$ $\textbf{L}$ (resp.\ a $\CI^r$ $\textbf{L} = (\textbf{J},\textbf{F})$) such that for all $\epsilon > 0$, we have
$\frac{1}{n}H(\textbf{L})\leq R+\epsilon$ for all sufficiently large $n$. 

We denote the infimum among all achievable $\CI_W$ rates by $\CI_W(X_{\mathcal{M}})$. For $r \ge m$, we denote the infimum among all achievable $\CI^r$ rates by $\CI^r(X_{\mathcal{M}})$. 
\end{definition}


The explanation for the $r\geq m$ condition in the definition of $\CI^r(X_{\mathcal{M}})$ above is similar to that given after Definition~\ref{def:RSKr}. To ensure that $\CI^r(X_{\mathcal{M}}) < \infty$, the existence of at least one $\CI^r$ pair $(\textbf{J},\textbf{F})$ is needed, and as observed earlier, this is guaranteed when $r \ge m$. The rate achieved by this guaranteed $\CI^r$ pair $(\textbf{J},\textbf{F})$ is $H(X_{\mathcal{M}})$. 

Furthermore, analogous to $R_{\SK}^r$, $\CI^r(X_{\mathcal{M}})$ is a non-increasing function of $r$. Hence, we can define $\CI(X_{\mathcal{M}}) \triangleq \lim_{r \to \infty} \CI^r(X_{\mathcal{M}})$. The proposition below records the relationships between some of the information-theoretic quantities defined so far. 

\begin{proposition} For all $r \ge m$, we have 
$H(X_{\mathcal{M}}) \ge \CI^r(X_{\mathcal{M}})\geq \CI(X_{\mathcal{M}})\geq \CI_W(X_{\mathcal{M}})\geq \textbf{I}(X_{\mathcal{M}})$.
\label{prop:ineqs}
\end{proposition}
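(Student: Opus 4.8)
The plan is to establish the chain of inequalities one link at a time, working from the outermost to the innermost quantity, since each comparison is essentially a monotonicity or containment statement between the feasible sets defining these optima. Most links should be almost immediate from the definitions, and I expect the only inequality requiring genuine content to be the last one, $\CI_W(X_{\mathcal{M}}) \ge \textbf{I}(X_{\mathcal{M}})$, which is the multiterminal analogue of the fact that Wyner common information dominates SK capacity.

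First I would dispense with the easy links. The bound $H(X_{\mathcal{M}}) \ge \CI^r(X_{\mathcal{M}})$ follows because, as noted in the text, the pair $\textbf{J} = X_{\mathcal{M}}^n$ together with an omniscience-enabling communication $\textbf{F}$ is a valid $\CI^r$ achieving rate $H(X_{\mathcal{M}})$; hence the infimum defining $\CI^r(X_{\mathcal{M}})$ is at most $H(X_{\mathcal{M}})$. The inequality $\CI^r(X_{\mathcal{M}}) \ge \CI(X_{\mathcal{M}})$ is immediate from the definition $\CI(X_{\mathcal{M}}) = \lim_{r \to \infty} \CI^r(X_{\mathcal{M}})$ together with the stated fact that $\CI^r(X_{\mathcal{M}})$ is non-increasing in $r$, so each term of the sequence is at least its limit. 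For $\CI(X_{\mathcal{M}}) \ge \CI_W(X_{\mathcal{M}})$, the key observation is one of set containment: every $\CI^r$ is by Definition~\ref{def:CI} a $\CI_W$ of the special form $(\textbf{J},\textbf{F})$, so the collection of achievable $\CI^r$ rates (for any $r \ge m$) is a subset of the collection of achievable $\CI_W$ rates; taking infima reverses the containment and gives $\CI^r(X_{\mathcal{M}}) \ge \CI_W(X_{\mathcal{M}})$ for every $r$, and passing to the limit preserves this.

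The substantive step, and the one I expect to be the main obstacle, is the final inequality $\CI_W(X_{\mathcal{M}}) \ge \textbf{I}(X_{\mathcal{M}})$. Here I would take any $\CI_W$ $\textbf{L}$ achieving a rate close to the infimum and show $\frac{1}{n}H(\textbf{L}) \gtrsim \textbf{I}(X_{\mathcal{M}})$ for large $n$. The natural route is to write, using the definition \eqref{cmi} of $\textbf{I}(X_{\mathcal{M}}^n \mid \textbf{L})$ and the identity $\textbf{I}(X_{\mathcal{M}}^n) = n\,\textbf{I}(X_{\mathcal{M}})$, a decomposition of the form $n\,\textbf{I}(X_{\mathcal{M}}) = \textbf{I}(X_{\mathcal{M}}^n) \le H(\textbf{L}) + \textbf{I}(X_{\mathcal{M}}^n \mid \textbf{L})$, i.e.\ conditioning on $\textbf{L}$ can decrease the quantity $\textbf{I}(X_{\mathcal{M}}^n \mid \cdot)$ by at most $H(\textbf{L})$. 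This reduces to verifying the submodularity-type estimate
\begin{equation}
\textbf{I}(X_{\mathcal{M}}^n) - \textbf{I}(X_{\mathcal{M}}^n \mid \textbf{L}) \le H(\textbf{L}),
\label{eq:keybound}
\end{equation}
which one checks by expanding both $\textbf{I}$ terms via \eqref{skcapacity} and \eqref{cmi} with the common optimizer $\lambda^*$, so that the difference becomes $\big[H(X_{\mathcal{M}}^n) - H(X_{\mathcal{M}}^n \mid \textbf{L})\big] - \sum_{B} \lambda_B^* \big[H(X_B^n \mid X_{B^c}^n) - H(X_B^n \mid X_{B^c}^n, \textbf{L})\big]$; each conditioning reduction is a mutual information $I(\,\cdot\,;\textbf{L})$, the first being $I(X_{\mathcal{M}}^n;\textbf{L})$ and the others $I(X_B^n;\textbf{L}\mid X_{B^c}^n)$, and elementary information inequalities together with the constraint $\sum_{B : i \in B}\lambda_B^* = 1$ bound the whole expression by $H(\textbf{L})$. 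Rearranging \eqref{eq:keybound} gives $\frac{1}{n}H(\textbf{L}) \ge \textbf{I}(X_{\mathcal{M}}) - \frac{1}{n}\textbf{I}(X_{\mathcal{M}}^n \mid \textbf{L})$, and since $\textbf{L}$ is a $\CI_W$ the subtracted term tends to zero; taking $n \to \infty$ and then the infimum over $\CI_W$s yields the claim. The delicate point to get right is the bookkeeping of the $\lambda^*$-weighted sum of conditional mutual informations, since naive bounding could lose the needed factor; I would rely on the nonnegativity of $\textbf{I}(X_{\mathcal{M}}^n \mid \textbf{L})$ (guaranteed by Proposition~II of \cite{MT10}, as cited) to keep the argument clean.
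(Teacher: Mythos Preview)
Your proposal is correct and follows essentially the same approach as the paper's proof. The paper makes the final step slightly cleaner by using that $\textbf{L}$ is a function of $X_{\mathcal{M}}^n$, so $I(X_{\mathcal{M}}^n;\textbf{L}) = H(\textbf{L})$ and $I(X_B^n;\textbf{L}\mid X_{B^c}^n) = H(\textbf{L}\mid X_{B^c}^n)$ exactly, whence the left side of your key inequality equals $H(\textbf{L}) - \sum_{B} \lambda_B^{*} H(\textbf{L}\mid X_{B^c}^n) \le H(\textbf{L})$ by nonnegativity of conditional entropy; the fractional-cover constraint on $\lambda^{*}$ is not needed here.
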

\begin{proof}
The first inequality is due to the fact that for any $r \ge m$, there exists a $\CI^r$ of rate $H(X_{\mathcal{M}})$. The second inequality is trivial. The third follows from the fact that a $\CI^r$ is a special type of $\CI_W$, so that $\CI^r(X_{\mathcal{M}}) \ge \CI_W(X_{\mathcal{M}})$.

For the last inequality, we start by observing that for any function $\textbf{L}$ of $X_{\mathcal{M}}^n$, we have 
\begin{align*}
\textbf{I}(X_{\mathcal{M}}^n)-\textbf{I}(X_{\mathcal{M}}^n|\textbf{L}) 
    &=I(X_{\mathcal{M}}^n;\textbf{L})-\sum_{B\in\mathcal{B}}\lambda_B^*I(X_B^n;\textbf{L}|X_{B^c}^n) \\
   &=H(\textbf{L})-\sum_{B\in\mathcal{B}}\lambda_B^*H(\textbf{L}|X_{B^c}^n) 
\end{align*}
Hence, 
\begin{equation}
\frac{1}{n}H(\textbf{L})\geq\textbf{I}(X_{\mathcal{M}})-\frac{1}{n}\textbf{I}(X_{\mathcal{M}}^n|\textbf{L}).
\label{eq:a}
\end{equation} 
Now, if $\textbf{L}$ is any $\CI_W$ of rate $R$, then by Definitions~\ref{def:CI} and \ref{def:CIrate}, for every $\epsilon>0$, we have $\frac{1}{n}H(\textbf{L}) \le R + \epsilon$ and $\frac{1}{n}\textbf{I}(X_{\mathcal{M}}^n|\textbf{L})\leq \epsilon$ for all sufficiently large $n$. Thus, in conjunction with \eqref{eq:a}, we have $R + \epsilon \ge \frac{1}{n}H(\textbf{L}) \ge \textbf{I}(X_{\mathcal{M}})-\epsilon$ for all sufficiently large $n$. In particular, $R+\epsilon \ge \textbf{I}(X_{\mathcal{M}})-\epsilon$ holds for any $\epsilon > 0$, from which we infer that $R \ge \textbf{I}(X_{\mathcal{M}})$. The inequality $\CI_W(X_{\mathcal{M}}) \ge \textbf{I}(X_{\mathcal{M}})$ now follows.
\end{proof}


Finally, analogous to Definition~\ref{def:RSKr}, we have a definition of achievable rate of $r$-interactive communication required to get a $\CI^r$. 

\begin{definition}
\label{def:RCI}
Let $r\geq m$ be fixed. A real number $R\geq 0$ is said to be an \emph{achievable rate of $r$-interactive communication for $\CI^r$} if for all $\epsilon > 0$ and for all sufficiently large $n$, there exist \emph{(i)}~an $r$-interactive communication $\textbf{F}$ satisfying $\frac{1}{n}\log|\cF| \; \leq R+\epsilon$, and \emph{(ii)}~a CR $\textbf{J}$ such that $\textbf{L}=(\textbf{J},\textbf{F})$ is a $\CI^r$.

We denote the infimum among all such achievable rates by $R_{\CI}^r$.
\end{definition}

As was the case with $R_{\SK}^r$, we observe that $R_{\CI}^r$ is a non-increasing sequence in $r$, bounded below by $0$. Thus, we can define $R_{\CI} \triangleq\lim_{r\to\infty}R_{\CI}^r$. The main theorem of our paper, stated in the next section, gives a lower bound on the communication complexity $R_{\SK}$, expressed in terms of the quantities defined in Definitions~\ref{def:CI}--\ref{def:RCI}.

\section{Lower Bound on $R_{\SK}$} \label{sec:main}

The goal of this section is to state and prove the main result of this paper, which partially extends Tyagi's two-terminal result \cite[Theorem~3]{Tyagi13} to the multiterminal setting.

\begin{theorem}
For all $r\geq m$, we have 
$$
R_{\SK}^r\geq R_{\CI}^r\geq \CI^r(X_{\mathcal{M}})-\textbf{I}(X_{\mathcal{M}}).
$$
Hence, by letting $r\to \infty$,
$$
R_{\SK}\geq R_{\CI}\geq \CI(X_{\mathcal{M}})-\textbf{I}(X_{\mathcal{M}}).
$$
\label{th:main}
\end{theorem}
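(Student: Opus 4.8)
The plan is to prove the two finite-$r$ inequalities separately; the limiting statement is then immediate on letting $r\to\infty$ and using $\CI(X_{\mathcal{M}})=\lim_r \CI^r(X_{\mathcal{M}})$. The engine behind both inequalities is a single fact about interactive communication: for any $r$-interactive communication $\textbf{F}$ and the optimal fractional partition $\lambda^*$,
\[
\sum_{B\in\mathcal{B}}\lambda_B^* H(\textbf{F}\mid X_{B^c}^n)\le H(\textbf{F}).
\]
I would prove this by expanding $H(\textbf{F})=\sum_{j=1}^r H(F_j\mid F_1,\dots,F_{j-1})$ and using that each transmission $F_j$ is produced by a single terminal $i_j$. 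If $i_j\in B^c$ then $F_j$ is a function of $(X_{i_j}^n,F_1,\dots,F_{j-1})$, so $H(F_j\mid F_1,\dots,F_{j-1},X_{B^c}^n)=0$; if $i_j\in B$ the term is at most $H(F_j\mid F_1,\dots,F_{j-1})$. Hence $H(\textbf{F}\mid X_{B^c}^n)\le\sum_{j:\,i_j\in B}H(F_j\mid F_1,\dots,F_{j-1})$, and summing against $\lambda^*$ while exchanging the order of summation weights the $j$th transmission by $\sum_{B:\,i_j\in B}\lambda_B^*=1$, collapsing the bound to $H(\textbf{F})$.

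For $R_{\SK}^r\ge R_{\CI}^r$, I would fix a communication $\textbf{F}$ of rate close to $R_{\SK}^r$ and a maximal-rate SK $\textbf{K}$ extracted from it, and show that $(\textbf{K},\textbf{F})$ is a $\CI^r$; since $\textbf{F}$ has the same rate and $\textbf{K}$ is a CR, this exhibits the SK communication rate as an achievable $\CI^r$ communication rate. The substance is to check $\frac1n\textbf{I}(X_{\mathcal{M}}^n\mid\textbf{K},\textbf{F})\to0$. Taking $\textbf{L}=(\textbf{K},\textbf{F})$ in the identity $\textbf{I}(X_{\mathcal{M}}^n\mid\textbf{L})=\textbf{I}(X_{\mathcal{M}}^n)-H(\textbf{L})+\sum_B\lambda_B^*H(\textbf{L}\mid X_{B^c}^n)$ from the proof of Proposition~\ref{prop:ineqs}, I would split $H(\textbf{L}\mid X_{B^c}^n)=H(\textbf{F}\mid X_{B^c}^n)+H(\textbf{K}\mid X_{B^c}^n,\textbf{F})$. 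Since $B^c$ is nonempty and $\textbf{K}$ is recoverable from $(X_i^n,\textbf{F})$ for any $i\in B^c$, Fano's inequality makes the second term $o(n)$. Applying the interactive-communication bound to control $\sum_B\lambda_B^*H(\textbf{F}\mid X_{B^c}^n)\le H(\textbf{F})$ and cancelling $H(\textbf{F})$ leaves $\textbf{I}(X_{\mathcal{M}}^n\mid\textbf{L})\le n\textbf{I}(X_{\mathcal{M}})-H(\textbf{K}\mid\textbf{F})+o(n)$. The maximal-rate condition $\frac1nH(\textbf{K})\ge\textbf{I}(X_{\mathcal{M}})-\epsilon$ and the secrecy condition $I(\textbf{K};\textbf{F})\le\epsilon$ give $H(\textbf{K}\mid\textbf{F})\ge n\textbf{I}(X_{\mathcal{M}})-o(n)$, so the bound is $o(n)$; combined with $\textbf{I}(X_{\mathcal{M}}^n\mid\textbf{L})\ge0$ this yields the claim.

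For $R_{\CI}^r\ge\CI^r(X_{\mathcal{M}})-\textbf{I}(X_{\mathcal{M}})$, I would take a $\CI^r$ pair $(\textbf{J},\textbf{F})$ of communication rate close to $R_{\CI}^r$ and run the very same computation with $\textbf{L}=(\textbf{J},\textbf{F})$ — the same identity, the same Fano estimate for $H(\textbf{J}\mid X_{B^c}^n,\textbf{F})$, and the same interactive-communication inequality — except that now $\frac1n\textbf{I}(X_{\mathcal{M}}^n\mid\textbf{L})\to0$ is the \emph{defining} property of a $\CI^r$ and is used as input. This produces $H(\textbf{J}\mid\textbf{F})\le n\textbf{I}(X_{\mathcal{M}})+o(n)$. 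Because $(\textbf{J},\textbf{F})$ is a $\CI^r$, the infimum defining $\CI^r(X_{\mathcal{M}})$ forces $\frac1nH(\textbf{J},\textbf{F})\ge\CI^r(X_{\mathcal{M}})-o(1)$, whence $\frac1n\log|\cF|\ge\frac1nH(\textbf{F})=\frac1nH(\textbf{J},\textbf{F})-\frac1nH(\textbf{J}\mid\textbf{F})\ge\CI^r(X_{\mathcal{M}})-\textbf{I}(X_{\mathcal{M}})-o(1)$, giving the inequality.

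I expect the main obstacle to be the interactive-communication inequality, which is the multiterminal generalization of the two-terminal relation $H(\textbf{F}\mid X_1^n)+H(\textbf{F}\mid X_2^n)\le H(\textbf{F})$ and which crucially marries the single-terminal structure of each transmission to the partition constraint $\sum_{B:\,i\in B}\lambda_B^*=1$. A secondary but delicate point is the order of the $\epsilon$-$n$ limits: the CR recoverability and secrecy guarantees hold only for each fixed $\epsilon$ at large $n$, so I must fix a target slack, then choose $\epsilon$ small and $n$ large, to turn the various $o(n)$ and $\epsilon n$ terms into genuinely vanishing rates and to move correctly between the infimum definitions of $R_{\SK}^r$, $R_{\CI}^r$, and $\CI^r(X_{\mathcal{M}})$.
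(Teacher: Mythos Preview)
Your proposal is correct and follows essentially the same route as the paper: the identity for $\textbf{I}(X_{\mathcal{M}}^n\mid\textbf{L})$ (the paper's Lemma~\ref{lemma:eq}), the Fano bound on $\sum_B\lambda_B^*H(\textbf{J}\mid X_{B^c}^n,\textbf{F})$ (Lemma~\ref{lem:JF}), and the interactive-communication inequality $\sum_B\lambda_B^*H(\textbf{F}\mid X_{B^c}^n)\le H(\textbf{F})$ (Lemma~\ref{lem:comm}, which the paper cites from \cite{CN08} and for which you supply a correct self-contained proof) are combined exactly as you describe. The only wrinkle is the line ``the infimum defining $\CI^r(X_{\mathcal{M}})$ forces $\frac1nH(\textbf{J},\textbf{F})\ge\CI^r(X_{\mathcal{M}})-o(1)$'': the infimum only controls $\limsup_n\frac1nH(\textbf{J},\textbf{F})$, not every $n$, so the paper instead rearranges your bound $H(\textbf{J}\mid\textbf{F})\le n\textbf{I}(X_{\mathcal{M}})+o(n)$ as $\frac1nH(\textbf{J},\textbf{F})\le\textbf{I}(X_{\mathcal{M}})+R_{\CI}^r+\epsilon$ and reads this off directly as showing $\textbf{I}(X_{\mathcal{M}})+R_{\CI}^r$ is an achievable $\CI^r$ rate --- a cosmetic fix you already flag as delicate in your final paragraph.
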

\vspace*{-10pt} 
By Proposition~\ref{prop:ineqs}, the lower bounds above are non-negative.

The ideas in our proof of Theorem~\ref{th:main} may be viewed as a natural extension of those in the proof of \cite[Theorem 3]{Tyagi13}. We start with three preliminary lemmas. In all that follows, $\lambda^* = (\lambda^*_B: B \in \mathcal{B})$ is any optimal $\lambda \in \Lambda$ for the linear program in \eqref{skcapacity}. 

\begin{lemma}
For any function $\textbf{L}$ of $X_{\mathcal{M}}$, we have 
$$
n\textbf{I}(X_{\mathcal{M}}) =  \textbf{I}(X_{\mathcal{M}}^n|\textbf{L})+H(\textbf{L})-\sum_{B\in \mathcal{B}} \lambda_B^*H(\textbf{L}|X_{B^c}^n). 
$$
\label{lemma:eq}
\end{lemma}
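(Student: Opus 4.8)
The plan is to verify the identity by direct computation, subtracting the two definitions \eqref{skcapacity} and \eqref{cmi} term by term. In fact this computation is already carried out, almost verbatim, at the start of the proof of Proposition~\ref{prop:ineqs}; the only difference is that there the resulting equality is weakened into the inequality \eqref{eq:a}, whereas here I would record it as an exact identity and then simply rearrange. So the essential content is to confirm that
$$
n\textbf{I}(X_{\mathcal{M}}) - \textbf{I}(X_{\mathcal{M}}^n|\textbf{L}) = H(\textbf{L}) - \sum_{B\in\mathcal{B}} \lambda_B^* H(\textbf{L}|X_{B^c}^n),
$$
which is precisely the claimed equation after moving $\textbf{I}(X_{\mathcal{M}}^n|\textbf{L})$ to the other side.

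First I would invoke the identity $\textbf{I}(X_{\mathcal{M}}^n) = n\textbf{I}(X_{\mathcal{M}})$ noted just after \eqref{cmi}, so that both terms on the left become instances of the same functional \eqref{cmi}, one unconditioned and one conditioned on $\textbf{L}$. Subtracting, the $H(X_{\mathcal{M}}^n)$ entropies and the $H(X_B^n|X_{B^c}^n, \cdot)$ entropies group into mutual informations, giving
$$
\textbf{I}(X_{\mathcal{M}}^n) - \textbf{I}(X_{\mathcal{M}}^n|\textbf{L}) = I(X_{\mathcal{M}}^n;\textbf{L}) - \sum_{B\in\mathcal{B}} \lambda_B^* I(X_B^n;\textbf{L}|X_{B^c}^n).
$$
The final step is to collapse each mutual information using that $\textbf{L}$ is a deterministic function of $X_{\mathcal{M}}^n$, so $H(\textbf{L}|X_{\mathcal{M}}^n)=0$. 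This immediately gives $I(X_{\mathcal{M}}^n;\textbf{L}) = H(\textbf{L})$; and since $X_B^n$ together with $X_{B^c}^n$ recovers all of $X_{\mathcal{M}}^n$ (as $B \cup B^c = \mathcal{M}$), we get $I(X_B^n;\textbf{L}|X_{B^c}^n) = H(\textbf{L}|X_{B^c}^n) - H(\textbf{L}|X_{\mathcal{M}}^n) = H(\textbf{L}|X_{B^c}^n)$. Substituting both yields the displayed right-hand side.

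There is no genuine obstacle here; the statement is a bookkeeping identity. The only two points requiring the slightest care are that $\textbf{L}$ being a function of $X_{\mathcal{M}}^n$ annihilates the $H(\textbf{L}|X_{\mathcal{M}}^n)$ terms, and that $B^c$ denotes the complement within $\mathcal{M}$, so conditioning jointly on $X_B^n$ and $X_{B^c}^n$ is the same as conditioning on $X_{\mathcal{M}}^n$. The real value of the lemma lies not in its proof but in the \emph{exactness} of the equality: it upgrades the one-sided bound \eqref{eq:a} into an identity, which I expect is what is subsequently needed to control $R_{\CI}^r$ in the proof of Theorem~\ref{th:main}.
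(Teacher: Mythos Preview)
Your proposal is correct and follows essentially the same route as the paper: both arguments exploit that $\textbf{L}$ is a function of $X_{\mathcal{M}}^n$ to split $H(X_{\mathcal{M}}^n)$ and $H(X_B^n\mid X_{B^c}^n)$ into conditioned-on-$\textbf{L}$ pieces plus residual $H(\textbf{L}\mid\cdot)$ terms. The only cosmetic difference is that the paper applies the chain rule directly (writing $H(X_{\mathcal{M}}^n)=H(X_{\mathcal{M}}^n,\textbf{L})=H(\textbf{L})+H(X_{\mathcal{M}}^n\mid\textbf{L})$, and similarly for the conditional terms) rather than first grouping the differences into mutual informations as you do; the two computations are identical in content.
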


\begin{IEEEproof}
Consider $\textbf{L}=\textbf{L}(X_{\mathcal{M}}^n)$. From \eqref{skcapacity}, we have
\begin{align*}
n\textbf{I}(X_{\mathcal{M}}) & = H(X_{\mathcal{M}}^n)-\sum_{B\in \mathcal{B}} \lambda_B^* H(X_B^n|X_{B^c}^n) \nonumber \\
                                                & = H(X_{\mathcal{M}}^n,\textbf{L})-\sum_{B\in \mathcal{B}} \lambda_B^* H(X_B^n,\textbf{L}|X_{B^c}^n) \\
                                                & = H(X_{\mathcal{M}}^n|\textbf{L}) + H(\textbf{L}) \\
                                                &\: \:\:\:\:\: -\sum_{B\in \mathcal{B}} \lambda_B^*H(X_B^n|\textbf{L},X_{B^c}^n) -\sum_{B\in \mathcal{B}}\lambda_B^*H(\textbf{L}|X_{B^c}^n) \\
                                                & = \textbf{I}(X_{\mathcal{M}}^n|\textbf{L})+H(\textbf{L})-\sum_{B\in \mathcal{B}} \lambda_B^*H(\textbf{L}|X_{B^c}^n) 
\end{align*}
the last equality above being due to \eqref{cmi}.
\end{IEEEproof}

\begin{lemma}
For any CR $\textbf{J}$ obtained from an interactive communication $\textbf{F}$,
$$
\lim_{n \to \infty} \frac{1}{n}\sum_{B\in \mathcal{B}} \lambda_B^* H(\textbf{J}|X_{B^c}^n,\textbf{F}) = 0.
$$
\label{lem:JF}
\end{lemma}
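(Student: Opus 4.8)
The plan is to reduce each term $H(\textbf{J}\mid X_{B^c}^n,\textbf{F})$ to a conditional entropy of $\textbf{J}$ given a single agreeing estimate, and then kill it via Fano's inequality. The starting observation is that for every $B\in\mathcal{B}$ the complement $B^c$ is non-empty, so I may fix some terminal $i\in B^c$. By Definition~\ref{def:CR}, the estimate $J_i=J_i(X_i^n,\textbf{F})$ is a function of $X_i^n$ and $\textbf{F}$, hence a function of $(X_{B^c}^n,\textbf{F})$ since $i\in B^c$. Conditioning on more can only decrease entropy, so
$$
H(\textbf{J}\mid X_{B^c}^n,\textbf{F})\le H(\textbf{J}\mid J_i).
$$
This is the key structural step: it replaces the awkward conditioning on the full block $X_{B^c}^n$ by conditioning on one estimate $J_i$ that, by the defining property of a CR, nearly equals $\textbf{J}$.

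Next I would quantify ``nearly equals'' using Fano's inequality. Fix $\epsilon\in(0,1/2)$. By Definition~\ref{def:CR}, for all sufficiently large $n$ we have $\Pr\{J_1=\cdots=J_m=\textbf{J}\}\ge 1-\epsilon$, and in particular $\Pr\{\textbf{J}\neq J_i\}\le\epsilon$. Viewing $J_i$ as an estimator of $\textbf{J}$, Fano's inequality gives $H(\textbf{J}\mid J_i)\le h(\epsilon)+\epsilon\log|\mathcal{J}^{(n)}|$, where $\mathcal{J}^{(n)}$ is the alphabet of $\textbf{J}$ and $h$ is the binary entropy function. Since $\textbf{J}$ is a function of $X_{\mathcal{M}}^n$, its alphabet satisfies $|\mathcal{J}^{(n)}|\le\prod_{i\in\mathcal{M}}|\mathcal{X}_i|^n$, so $\frac1n\log|\mathcal{J}^{(n)}|\le\sum_{i\in\mathcal{M}}\log|\mathcal{X}_i|=:c$, a constant independent of $n$. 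Combining the displays yields, for all large $n$,
$$
\frac1n H(\textbf{J}\mid X_{B^c}^n,\textbf{F})\le\frac1n h(\epsilon)+\epsilon\,c .
$$

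To finish I would sum over $B\in\mathcal{B}$ with the weights $\lambda^*_B$. The simplex constraint defining $\Lambda$ gives, upon summing $\sum_{B:i\in B}\lambda^*_B=1$ over $i\in\mathcal{M}$, that $\sum_{B\in\mathcal{B}}\lambda^*_B|B|=m$, whence $\sum_{B\in\mathcal{B}}\lambda^*_B\le m$ because $|B|\ge1$. Therefore, for all sufficiently large $n$,
$$
0\le\frac1n\sum_{B\in\mathcal{B}}\lambda^*_B H(\textbf{J}\mid X_{B^c}^n,\textbf{F})\le m\Bigl(\tfrac1n h(\epsilon)+\epsilon\,c\Bigr).
$$
Letting $n\to\infty$ with $\epsilon$ fixed gives $\limsup_n$ of the middle quantity at most $m\epsilon c$, and since $\epsilon>0$ was arbitrary the limit is $0$, as claimed.

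I do not anticipate a genuine obstacle here; the only point requiring care is the order of quantifiers. The CR property only guarantees the agreement bound $1-\epsilon$ for \emph{each} fixed $\epsilon$ once $n$ is large enough, so I must fix $\epsilon$ first, send $n\to\infty$ to dispose of the $\frac1n h(\epsilon)$ term, and only then let $\epsilon\to0$. The uniform-in-$B$ bound on the alphabet size $|\mathcal{J}^{(n)}|$ and the bound $\sum_B\lambda^*_B\le m$ are what make the Fano estimate survive normalization by $n$ and the weighted sum.
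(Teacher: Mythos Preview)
Your argument is correct and follows essentially the same route as the paper: apply Fano's inequality to each $H(\textbf{J}\mid X_{B^c}^n,\textbf{F})$ using that some $J_i$ with $i\in B^c$ agrees with $\textbf{J}$ with high probability, then control the weighted sum. The only cosmetic differences are that you make the intermediate step $H(\textbf{J}\mid X_{B^c}^n,\textbf{F})\le H(\textbf{J}\mid J_i)$ explicit and you bound $\sum_B\lambda_B^*\le m$ via the fractional-cover constraint, whereas the paper uses the cruder $\lambda_B^*\le1$, $|\mathcal{B}|=2^m-2$.
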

\begin{IEEEproof}
Fix an $\epsilon > 0$. We have for all sufficiently large $n$, by Fano's inequality,
\begin{align}
\frac{1}{n}\sum_{B\in \mathcal{B}} \lambda_B^* H(\textbf{J}|X_{B^c}^n,\textbf{F})
           & \leq \frac{1}{n}\sum_{B\in \mathcal{B}}\lambda_B^* \left( h(\epsilon)+\epsilon H(X_{B^c}^n,\textbf{F})\right) \nonumber \\
           &  \leq \frac{1}{n}\sum_{B\in \mathcal{B}}\lambda_B^* \left( h(\epsilon)+\epsilon H(X_{\mathcal{M}}^n,\textbf{F})\right) \nonumber \\
           &  = \frac{1}{n}\sum_{B\in \mathcal{B}}\lambda_B^* \left( h(\epsilon)+\epsilon H(X_{\mathcal{M}}^n)\right) \nonumber \\
           &  = \frac{1}{n}\sum_{B\in \mathcal{B}}\lambda_B^* \left( h(\epsilon)+n \epsilon H(X_{\mathcal{M}})\right) \nonumber \\
           &  \leq (2^m-2)\left[h(\epsilon) + \epsilon  H(X_{\mathcal{M}}) \right] \label{th:main:a} 
\end{align}
where $h(.)$ is the binary entropy function, and (\ref{th:main:a}) follows from the fact that, by definition, $\lambda_B^*\leq 1$ and $|\mathcal{B}|=2^m-2$. Note that the expression in \eqref{th:main:a} goes to $0$ with $\epsilon$, since $h(\epsilon)\to 0$ as $\epsilon \to 0$, and $H(X_{\mathcal{M}}) \le \log(\prod_{j=1}^m |\mathcal{X}_j|)$.
\end{IEEEproof}

The last lemma we need, stated without proof, is a special case of \cite[Lemma B.1]{CN08}.

\begin{lemma}[\cite{CN08}, Lemma B.1]
For an interactive communication $\textbf{F}$ we have
$$
H(\textbf{F})\geq \sum_{B\in \mathcal{B}} \lambda_B^*H(\textbf{F}|X_{B^c}^n).
$$
\label{lem:comm}
\end{lemma}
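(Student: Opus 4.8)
The plan is to expand both sides by the entropy chain rule along the sequence of transmissions and then exploit two facts: that a transmission is ``free'' under conditioning on the randomness of the terminal that produced it, and that the weights $\lambda_B^*$ sum to $1$ over all blocks containing any fixed terminal. Write $\textbf{F} = (f_1, \ldots, f_r)$, and for each $j$ let $i(j) \in \mathcal{M}$ denote the terminal that sends $f_j$, so that $f_j$ is a deterministic function of $X_{i(j)}^n$ and the prior transmissions $f_1, \ldots, f_{j-1}$. The chain rule gives $H(\textbf{F}) = \sum_{j=1}^r H(f_j \mid f_1, \ldots, f_{j-1})$ and, for each $B \in \mathcal{B}$, $H(\textbf{F} \mid X_{B^c}^n) = \sum_{j=1}^r H(f_j \mid f_1, \ldots, f_{j-1}, X_{B^c}^n)$.

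First I would observe that whenever $i(j) \in B^c$, the transmission $f_j$ is a function of $X_{B^c}^n$ together with $f_1, \ldots, f_{j-1}$, so the corresponding term $H(f_j \mid f_1, \ldots, f_{j-1}, X_{B^c}^n)$ vanishes. Hence only the indices $j$ with $i(j) \in B$ survive in the expansion of $H(\textbf{F} \mid X_{B^c}^n)$. For each such surviving term, ``conditioning reduces entropy'' yields $H(f_j \mid f_1, \ldots, f_{j-1}, X_{B^c}^n) \le H(f_j \mid f_1, \ldots, f_{j-1})$. Combining these two observations gives $H(\textbf{F} \mid X_{B^c}^n) \le \sum_{j : i(j) \in B} H(f_j \mid f_1, \ldots, f_{j-1})$ for every $B \in \mathcal{B}$.

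Next I would form the $\lambda^*$-weighted sum and interchange the order of summation, obtaining $\sum_{B \in \mathcal{B}} \lambda_B^* H(\textbf{F} \mid X_{B^c}^n) \le \sum_{j=1}^r H(f_j \mid f_1, \ldots, f_{j-1}) \bigl( \sum_{B \in \mathcal{B} :\, i(j) \in B} \lambda_B^* \bigr)$. The inner sum is precisely the feasibility constraint defining $\Lambda$ evaluated at terminal $i(j)$, namely $\sum_{B :\, i(j) \in B} \lambda_B^* = 1$. Substituting this identity collapses the right-hand side to $\sum_{j=1}^r H(f_j \mid f_1, \ldots, f_{j-1}) = H(\textbf{F})$, which is exactly the asserted inequality.

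I expect the only real care-point, rather than a genuine obstacle, to be the bookkeeping of the interactive protocol: correctly identifying the sender $i(j)$ of each transmission and verifying that the vanishing step uses precisely the defining property of an interactive communication, namely that each $f_j$ depends only on a single terminal's observations together with the past transmissions. Everything else reduces to the chain rule, monotonicity of conditional entropy, and the single algebraic identity $\sum_{B :\, i \in B} \lambda_B^* = 1$ coming from the definition of $\Lambda$.
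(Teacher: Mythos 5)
Your proof is correct: the chain-rule expansion of $H(\textbf{F})$ and $H(\textbf{F}\mid X_{B^c}^n)$ over transmissions, the vanishing of the terms whose sender $i(j)$ lies in $B^c$ (by the defining property of interactive communication), conditioning reduces entropy, and the constraint $\sum_{B:\, i \in B}\lambda_B^* = 1$ from the definition of $\Lambda$ combine exactly as you say. The paper itself states this lemma without proof, citing it as a special case of Lemma~B.1 of \cite{CN08}, and your argument is precisely the standard proof of that cited lemma, so your approach coincides with the source's.
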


\vspace*{-15pt}
With these lemmas in hand, we can proceed to the proof of Theorem~\ref{th:main}.

\begin{IEEEproof}[Proof of Theorem \ref{th:main}]
The proof is done in two parts. In the first part, we prove that $R_{\CI}^r\geq \CI^r(X_{\mathcal{M}})-\textbf{I}(X_{\mathcal{M}})$. In the second part, we show that $R_{\SK}^r\geq R_{\CI}^r$.

\textit{Part~I: $R_{\CI}^r \geq \CI^r(X_{\mathcal{M}})-\textbf{I}(X_{\mathcal{M}})$}

The idea is to show that $\textbf{I}(X_{\mathcal{M}})+R_{\CI}^r$ is an achievable $\CI^r$ rate, so that $\CI^r(X_{\mathcal{M}}) \le \textbf{I}(X_{\mathcal{M}})+R_{\CI}^r$.

Fix an $\epsilon > 0$. By the definition of $R_{\CI}^r$, for all sufficiently large $n$, there exists an $r$-interactive communication $\textbf{F}$ satisfying $\frac{1}{n}\log |\cF| \; \leq R_{\CI}^r+\epsilon/2$ and a CR $\textbf{J}$ such that $\textbf{L}=(\textbf{J},\textbf{F})$ is a $\CI^r$. We will show that $\frac{1}{n} H(\textbf{J},\textbf{F}) \le \textbf{I}(X_{\mathcal{M}})+R_{\CI}^r+\epsilon$ for all sufficiently large $n$. This, by Definition~\ref{def:CIrate}, suffices to show that $\textbf{I}(X_{\mathcal{M}})+R_{\CI}^r$ is an achievable $\CI^r$ rate.

Setting $\textbf{L} = (\textbf{J},\textbf{F})$ in Lemma~\ref{lemma:eq}, we obtain
\begin{align}
& \frac{1}{n}\left[H(\textbf{J},\textbf{F})  -\sum_{B\in \mathcal{B}} \lambda_B^* H(\textbf{F}|X_{B^c}^n)\right] -\textbf{I}(X_{\mathcal{M}}) \nonumber \\
& \hspace*{3em}
= \ \frac{1}{n}\left[\sum_{B\in \mathcal{B}} \lambda_B^* H(\textbf{J}|X_{B^c}^n,\textbf{F})-\textbf{I}(X_{\mathcal{M}}^n|\textbf{J},\textbf{F})\right] \nonumber \\
& \hspace*{3em}
\leq \ \epsilon/2, \label{th:main:c}
\end{align}
where (\ref{th:main:c}) follows from Lemma~\ref{lem:JF}. Re-arranging, we get
\begin{align*}
\frac{1}{n} H(\textbf{J},\textbf{F}) & \le \textbf{I}(X_{\mathcal{M}})+\frac{1}{n} \sum_{B\in \mathcal{B}} \lambda_B^* H(\textbf{F}|X_{B^c}^n) +\epsilon/2 \\
& \le \textbf{I}(X_{\mathcal{M}})+\frac{1}{n} H(\textbf{F})+\epsilon/2
\end{align*}
the second inequality coming from Lemma~\ref{lem:comm}. Finally, using the fact that $\frac{1}{n}H(\textbf{F})\leq \frac{1}{n}\log|\cF| \; \leq R_{\CI}^r+\epsilon/2$, we see that
$$
\frac{1}{n} H(\textbf{J},\textbf{F}) \leq \textbf{I}(X_{\mathcal{M}})+R_{\CI}^r+\epsilon 
$$
which is what we set out to prove.

\textit{Part II: $R_{\SK}^r\geq R_{\CI}^r$}

Fix $\epsilon > 0$. From the definition of $R_{SK}^r$, there exist an $r$-interactive communication $\textbf{F}$ and an SK $\textbf{K}$ obtained from $\textbf{F}$ such that, for all sufficiently large $n$, $\frac{1}{n}\log|\cF| \; \leq R_{SK}^r+\epsilon$ and $\frac{1}{n}H(\textbf{K})\geq \textbf{I}(X_{\mathcal{M}})-\epsilon$. We wish to show that $(\textbf{K},\textbf{F})$ is a $\CI^r$, so that by Definition~\ref{def:RCI}, we would have  $R_{\SK}^r\geq R_{\CI}^r$.

Setting $\textbf{L}=(\textbf{K},\textbf{F})$ in Lemma~\ref{lemma:eq}, we have for all sufficiently large $n$,
\begin{align}
\frac{1}{n}\textbf{I}(X_{\mathcal{M}}^n|\textbf{K},\textbf{F}) 
            & = \textbf{I}(X_{\mathcal{M}})-\frac{1}{n}H(\textbf{K},\textbf{F})+\frac{1}{n}\sum_{B\in \mathcal{B}} \lambda_B^*H(\textbf{F}|X_{B^c}^n) \nonumber \\
            & \hspace{0.47cm}+\frac{1}{n}\sum_{B\in \mathcal{B}}\lambda_B^*H(\textbf{K}|X_{B^c}^n,\textbf{F}) \nonumber \\
            & \leq  \textbf{I}(X_{\mathcal{M}})-\frac{1}{n}H(\textbf{K}|\textbf{F}) + \epsilon \label{th:main:e} \\
            & \leq   \textbf{I}(X_{\mathcal{M}})-\frac{1}{n}H(\textbf{K})+\epsilon+\epsilon \label{th:main:f} \\ 
            &  \leq 3\epsilon, \label{th:main:g}
\end{align}
where (\ref{th:main:e}) follows from Lemmas~\ref{lem:JF} and \ref{lem:comm}, (\ref{th:main:f}) follows from the fact that $I(\textbf{K};\textbf{F})\leq \epsilon$, while (\ref{th:main:g}) is due to the fact that $\frac{1}{n}H(\textbf{K})\geq \textbf{I}(X_{\mathcal{M}})-\epsilon$. Thus, by Definition~\ref{def:CI}, $(\textbf{K},\textbf{F})$ is a $\CI^r$.
\end{IEEEproof}

We do not know if the lower bounds of Theorem~\ref{th:main} are in general tight, in the sense of there being matching upper bounds. For the special case of the two-terminal model, Theorem~3 of \cite{Tyagi13} shows that the bound on $R_{\SK}$ is tight (albeit under a weaker notion of SK, as explained in Footnote~\ref{fn:weakSK}). Another issue with our Theorem~\ref{th:main} is that the bounds are difficult to evaluate explicitly, as we do not have a computable characterization of $\CI^r(X_{\mathcal{M}})$ or $\CI(X_{\mathcal{M}})$. However, in the next section, we show that a version of our bound can be computed exactly in the case of the complete graph PIN model, where it matches an upper bound known from \cite{NN10}.

\section{The Linear Communication Complexity of the Complete Graph PIN Model\label{sec:linear}}

Throughout this section, we focus solely on the PIN model of Nitinawarat and Narayan \cite{NN10}, which we quickly review first. The model is defined on an underlying graph $\cG = (\cV,\cE)$ with $\cV = \cM$, the set of $m$ terminals of the model. For $n \in \N$, define $\cG^{(n)}$ to be the multigraph $(\cV,\cE^{(n)})$, where $\cE^{(n)}$ is the multiset of edges formed by taking $n$ copies of each edge of $\cG$. Associated with each edge $e \in \cE^{(n)}$ is a Bernoulli$(1/2)$ random variable $\xi_e$; the $\xi_e$s associated with distinct edges in $\cE^{(n)}$ are independent. With this, the random variables $X_i^n$, for $i\in\mathcal{M}$, are defined as $X_i^n=(\xi_e$ : $e\in\mathcal{E}^{(n)}$ and $e$ is incident on $i$). When $\cG = K_m$, the complete graph on $m$ vertices, we have the \emph{complete graph PIN model}. 

The SK capacity, $\textbf{I}(X_{\cM})$, of a PIN model defined on a graph $\cG$ is equal to the ``spanning tree packing rate'' of $\cG$ \cite[Theorem~5]{NN10}. When $\cG = K_m$, this can be computed to be $m/2$ \cite{TKSV12}. As mentioned in the Introduction (see Footnote~\ref{fn:linSK}), it is known that in the PIN model, a maximal-rate SK can be generated by a protocol in which the public communication $\textbf{F}$ and the SK $\textbf{K}$ are \emph{linear} functions of $X_{\cM}^n$. Of course, to have linear functions, we must assume that all the underlying alphabets, $\cX_i$, $\cF$ etc., are linear spaces --- indeed, we take them to be finite-dimensional vector spaces over the binary field $\F_2$. As shown in \cite{NN10}, a maximal-rate SK $\textbf{K}$ (which may be taken to be a linear function of $X_{\cM}^n$) can be obtained from an omniscience-enabling linear $m$-interactive communication $\textbf{F}$ of rate $R_{\CO} \triangleq H(X_\cM) - \textbf{I}(X_\cM)$. The quantity $R_{\CO}$ is the minimum rate of communication (not necessarily linear) that enables omniscience at all terminals.

It is natural to ask whether a lower rate of communication could suffice to achieve SK capacity within the PIN model, when the communication and the SK are restricted to be linear functions of $X_{\cM}^n$. To formulate this question precisely, we modify Definition~\ref{def:RSKr} by additionally requiring that the $r$-interactive communication $\textbf{F}$ and the SK $\textbf{K}$ be linear functions of $X_{\cM}^n$, and denote by $LR_{\SK}^r$ the infimum over all achievable rates as per the modified definition. Analogous to \eqref{def:RSK}, we define the \emph{linear communication complexity} of generating a maximal-rate SK to be $LR_{\SK} = \lim_{r\to\infty} LR_{\SK}^r$. By the discussion before the definition of $R_{\CO}$ above, we obviously have $LR_{\SK} \le LR_{\SK}^r \le R_{\CO}$ for all $r \ge m$. The question is whether $LR_{\SK} = R_{\CO}$. 

To answer this question, we need lower bounds on $LR_{\SK}^r$ and $LR_{\SK}$. Bounds analogous to those in Theorem~\ref{th:main} can be readily obtained by simply modifying the appropriate definitions. Thus, for any PIN model, we define $\LCI_W$, $\LCI^r$, $\LCI_W(X_{\cM})$, $\LCI^r(X_{\cM})$, $LR_{\CI}^r$ and $LR_{\CI}$ analogous to $\CI_W$, $\CI^r$, $\CI_W(X_{\cM})$, $CI^r(X_{\cM})$, $R_{\CI}^r$ and $R_{\CI}$, respectively, by modifying Definitions~\ref{def:CI}--\ref{def:RCI} to include the additional requirement that $\textbf{L}$, $\textbf{J}$ and $\textbf{F}$ be linear functions of $X_{\cM}^n$. The arguments of Section~\ref{sec:main} then show that the linear analogues of Proposition~\ref{prop:ineqs} and Theorem~\ref{th:main} hold for any PIN model. For future reference, we record two inequalities in particular: for all $r \ge m$,
\begin{gather}
H(X_{\mathcal{M}}) \ge \LCI^r(X_{\mathcal{M}}) \ge \LCI_W(X_{\mathcal{M}}) \label{LCI_ineq1} \\
LR_{\SK}^r \geq \LCI^r(X_{\mathcal{M}})-\textbf{I}(X_{\mathcal{M}}) \label{LCI_ineq2}
\end{gather}

From this point on, we restrict our attention to the complete graph PIN model, where we are actually able to determine $LR_{\SK}$ exactly. For this model, we know that $\textbf{I}(X_{\cM}) = m/2$ \cite{TKSV12}, and hence, $R_{\CO} = \binom{m}{2} - m/2 = m(m-2)/2$. Thus, we have
\begin{equation}
LR_{\SK} \le LR_{\SK}^r \le m(m-2)/2
\label{LR_pin_ineq}
\end{equation}
for all $r \ge m$. The theorem below states that the inequalites in~\eqref{LR_pin_ineq} are all equalities.

\begin{theorem}
For the PIN model defined on the complete graph $K_m$, we have for any $r \ge m$,
$$
LR_{\SK}^r=LR_{\SK}=R_{\CO} = m(m-2)/2.
$$
\label{th:CPIN}
\end{theorem}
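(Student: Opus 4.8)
The plan is to establish the matching lower bound $LR_{\SK}^r \ge m(m-2)/2$ for every $r\ge m$; together with the upper bound~\eqref{LR_pin_ineq} this forces all the listed quantities to coincide, and taking $r\to\infty$ then handles $LR_{\SK}$. By the linear analogue~\eqref{LCI_ineq2} of Theorem~\ref{th:main} and the value $\textbf{I}(X_{\cM}) = m/2$, it suffices to show that $\LCI^r(X_{\cM}) = \binom{m}{2} = H(X_{\cM})$. Since~\eqref{LCI_ineq1} already supplies $H(X_{\cM})\ge\LCI^r(X_{\cM})\ge\LCI_W(X_{\cM})$, the whole theorem reduces to the single inequality $\LCI_W(X_{\cM})\ge\binom{m}{2}$: informally, any \emph{linear} Wyner common information for the complete-graph PIN model must reveal essentially all of the $n\binom{m}{2}$ edge bits.

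To prove this I would pass to the linear-algebraic description of entropy. Writing $N = n\binom{m}{2}$, the edge variables are i.i.d.\ uniform over $\F_2$, so $X_{\cM}^n$ is uniform over $\F_2^{N}$; a linear $\textbf{L}$ is then described by the row space $\mathcal{L}$ of its defining matrix, $H(\textbf{L}) = \dim\mathcal{L}$, and every (conditional) entropy entering $\textbf{I}(X_{\cM}^n\mid\textbf{L})$ becomes the dimension of a sum or intersection of coordinate subspaces. The symmetry of $K_m$ lets me take the optimal $\lambda^*$ to be the symmetric one, namely $\lambda^*_B = \tfrac{1}{m-1}$ when $|B| = m-1$ and $\lambda^*_B = 0$ otherwise; one checks this lies in $\Lambda$ and attains $\sum_B\lambda^*_B H(X_B\mid X_{B^c}) = m(m-2)/2 = R_{\CO}$, hence is optimal. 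For such $B = \cM\setminus\{j\}$ the conditioning set $X_{B^c}^n$ is precisely the observation $X_j^n$, whose row space $W_j$ (the span of the edge variables incident to $j$) has dimension $n(m-1)$. Substituting these dimensions into~\eqref{cmi} and simplifying, I expect the clean identity
\begin{equation}
(m-1)\,\textbf{I}(X_{\cM}^n\mid\textbf{L}) \;=\; N + \dim\mathcal{L} - \sum_{j=1}^m \dim\!\left(W_j\cap\mathcal{L}\right). \label{eq:cpinid}
\end{equation}

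The crux --- and the step I expect to be the main obstacle --- is the purely linear-algebraic inequality
$$
\sum_{j=1}^m \dim\!\left(W_j\cap\mathcal{L}\right) \;\le\; 2\dim\mathcal{L},
$$
valid for \emph{every} subspace $\mathcal{L}\subseteq\F_2^{N}$. Granting it, \eqref{eq:cpinid} gives $\dim\mathcal{L}\ge N - (m-1)\textbf{I}(X_{\cM}^n\mid\textbf{L})$, whence $\tfrac1n H(\textbf{L})\ge\binom{m}{2} - (m-1)\tfrac1n\textbf{I}(X_{\cM}^n\mid\textbf{L})$; for a $\LCI_W$ the last term vanishes as $n\to\infty$, yielding $\LCI_W(X_{\cM})\ge\binom{m}{2}$. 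To prove the inequality I would represent $\mathcal{L}$ by a full-row-rank generator matrix $G$ (so $\dim\mathcal{L} = \rank G$) and use the standard fact that $\dim(W_j\cap\mathcal{L}) = \dim\mathcal{L} - \rank(G_j)$, where $G_j$ collects the columns of $G$ indexed by the edges \emph{not} incident to $j$; the claim is then equivalent to $\sum_j \rank(G_j)\ge(m-2)\dim\mathcal{L}$. Here the graph structure enters decisively. Fixing a basis $B$ (a maximal independent set of edges) of the $\F_2$-column matroid of $G$, so $|B| = \rank G = \dim\mathcal{L}$, then for each vertex $j$ the edges of $B$ not incident to $j$ form an independent set among the columns of $G_j$, giving $\rank(G_j)\ge|\{e\in B: e \text{ not incident to } j\}|$. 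Summing over $j$ and using that each edge is incident to exactly two vertices --- so fails to be incident to exactly $m-2$ of them --- yields $\sum_j\rank(G_j)\ge(m-2)|B| = (m-2)\dim\mathcal{L}$, as needed.

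Finally I would assemble the chain: $\LCI_W(X_{\cM})\ge\binom{m}{2}$ combined with~\eqref{LCI_ineq1} forces $\LCI^r(X_{\cM}) = \binom{m}{2}$ for all $r\ge m$; then~\eqref{LCI_ineq2} gives $LR_{\SK}^r\ge\binom{m}{2}-m/2 = m(m-2)/2$, and the upper bound~\eqref{LR_pin_ineq} closes the gap to equality for every $r\ge m$. Since the common value does not depend on $r$, passing to the limit gives $LR_{\SK} = m(m-2)/2 = R_{\CO}$ as well.
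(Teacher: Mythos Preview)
Your proposal is correct and follows essentially the same route as the paper: identify the symmetric optimal $\lambda^*$, convert the conditional-information expression into ranks of column submatrices of the defining matrix of $\textbf{L}$, and then prove the key inequality $\sum_j \rank(G_j)\ge(m-2)\dim\mathcal{L}$ by fixing a column basis and counting, for each basis edge, the $m-2$ vertices it is not incident to. Your repackaging via $\dim(W_j\cap\mathcal{L})$ (and the identity~\eqref{eq:cpinid}) is a cosmetic dualization of the paper's direct computation $H(\textbf{L}\mid X_j^n)=\rank(L|_{\cE_j^c})$, but the substance --- in particular the column-counting step --- is identical.
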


Thus, for the complete graph PIN model, we are able to answer in the affirmative the question asked earlier of whether $LR_{\SK} = R_{\CO}$. In particular, the linear communication complexity of this PIN model is achieved by the SK generation protocol of \cite{NN10} that goes through omniscience at all terminals.

The remainder of this section is devoted to a proof of Theorem~\ref{th:CPIN}. The key idea is to compute $\LCI^r(X_{\cM})$, for which we need a means of dealing with the quantity $\textbf{I}(X_{\cM}^n|\textbf{L})$ when $\textbf{L}$ is a linear function of $X_{\cM}^n$. To start with, we explicitly determine $\lambda^*$, the optimal $\lambda \in \Lambda$ for the linear program in \eqref{skcapacity}. If we define $\tilde{\lambda} = (\tilde{\lambda}_B: B \in \cB)$ such that $\tilde{\lambda}_B = \frac{1}{m-1}$ whenever $|B| = m-1$, and $\tilde{\lambda}_B = 0$ otherwise, then it can be easily verified that $\tilde{\lambda} \in \Lambda$, and moreover,
$$
H(X_{\mathcal{M}})- \sum_{B \in \mathcal{B}} \tilde{\lambda}_B H(X_B|X_{B^c}) = m/2.
$$
Since we know that $\textbf{I}(X_{\cM}) = m/2$, we infer from \eqref{skcapacity} that $\tilde{\lambda}$ is an optimal $\lambda \in \Lambda$, i.e., $\lambda^* = \tilde{\lambda}$. Hence, for the complete graph PIN model, \eqref{cmi} reduces to 
\begin{equation}
\textbf{I}(X_{\mathcal{M}}^n | \textbf{L}) = H(X_{\mathcal{M}}^n | \textbf{L})- \frac{1}{m-1} \sum_{i=1}^m H(X_{\cM \setminus \{i\}}^n|X_{i}^n,\textbf{L})  \label{cmi_pin}
\end{equation}
Now consider any linear function $\textbf{L}$ of $X_{\mathcal{M}}^n$. The fact that $\textbf{L}$ is a function of $X_{\cM}^n$ allows us to further simplify \eqref{cmi_pin}:
\begin{align}
\textbf{I}(X_{\mathcal{M}}^n | \textbf{L}) 
   & = H(X_{\mathcal{M}}^n) - H(\textbf{L}) \notag \\
   & \hspace{1.2em} - \frac{1}{m-1} \sum_{i=1}^m \left[H(X_{\cM}^n) - H(X_i^n) - H(\textbf{L}|X_i^n)\right] \notag \\
   & = \frac{nm}{2} - H(\textbf{L}) + \frac{1}{m-1} \sum_{i=1}^m H(\textbf{L}|X_i^n) \label{cmi_pin2},
\end{align}
the equality \eqref{cmi_pin2} using the facts that $H(X_{\cM}^n) = n\binom{m}{2}$ and $H(X_i^n) = n(m-1)$. 

To proceed further, we need to use the linearity of $\textbf{L}$. Observe that $\textbf{L}(X_{\cM}^n)$ can be viewed as the product $L \xi$ over the binary field $\F_2$, where $\xi$ is the random vector $(\xi_e\: : \: e\in\mathcal{E}^{(n)})$ and $L$ is a (deterministic) matrix over $\F_2$ with $\frac{nm(m-1)}{2}$ columns. The columns of $L$ are indexed by the set $\cE^{(n)}$; the indexing of columns of $L$ is in the same order as the indexing of the coordinates of $\xi$. For $i \in \cM$, let $\cE_i = \{e \in \cE^{(n)}: e \text{ is incident with } i\}$. 

The lemma below allows us to express $H(\textbf{L})$ and $H(\textbf{L}|X_i^n)$ in \eqref{cmi_pin2} in terms of the ranks of certain submatrices of $L$. 

\begin{lemma}
Let $Y=(Y_1,Y_2,\cdots,Y_p)$ be a vector of i.i.d.\ Bernoulli$(1/2)$ random variables and $A$ be any matrix over $\F_2$ with $p$ columns. Consider $Z=AY$, all operations being over $\F_2$. For $S \subseteq\{1,2,\cdots,p\}$, let $Y_S = (Y_i: i \in S)$, and let $A|_S$ denote the submatrix of $A$ consisting of the columns indexed by $S$. We then have
\begin{itemize}
\item[(a)] $H(Z)=\text{rank}(A)$.
\item[(b)] $H(Z|Y_S)=\text{rank}(A|_{S^c})$.
\end{itemize}
\label{lem:rank}
\end{lemma}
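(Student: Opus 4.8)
The plan is to prove both parts by exploiting the fact that $Y$ is uniform over the vector space $\F_2^p$, so that the linear image $Z = AY$ is uniform over the column space of $A$. Part (a) is the heart of the matter; part (b) then follows by a conditioning argument that reduces to (a).

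For part (a), first I would note that $Z$ takes values only in the image (column space) $\mathrm{Im}(A) = \{Ay : y \in \F_2^p\}$, which is a subspace of dimension $\rank(A)$ and hence has exactly $2^{\rank(A)}$ elements. The key step is to show that $Z$ is \emph{uniformly} distributed over this image. For any $z \in \mathrm{Im}(A)$, the fiber $\{y : Ay = z\}$ is a coset of the kernel $\ker(A)$, and all such cosets have the same cardinality $|\ker(A)| = 2^{p - \rank(A)}$. Since $Y$ is uniform over $\F_2^p$, with each of the $2^p$ vectors having probability $2^{-p}$, each $z \in \mathrm{Im}(A)$ is attained with probability $2^{p-\rank(A)} \cdot 2^{-p} = 2^{-\rank(A)}$. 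Thus $Z$ is uniform over a set of size $2^{\rank(A)}$, and therefore $H(Z) = \rank(A)$, entropies being measured in bits, i.e., with $\log$ to the base $2$.

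For part (b), I would fix a value $y_S$ of $Y_S$ and write $Z = A|_S\, y_S + A|_{S^c}\, Y_{S^c}$, splitting the matrix--vector product according to the partition of the columns into $S$ and $S^c$. Given $Y_S = y_S$, the term $A|_S\, y_S$ is a fixed vector, while $Y_{S^c}$ is uniform over $\F_2^{|S^c|}$ and independent of $Y_S$. Adding a constant vector does not change entropy, so applying part (a) to the matrix $A|_{S^c}$ and the uniform vector $Y_{S^c}$ gives $H(Z \mid Y_S = y_S) = \rank(A|_{S^c})$, and this holds for every value $y_S$. Averaging over $Y_S$ yields $H(Z \mid Y_S) = \rank(A|_{S^c})$, which is the claim.

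The argument is essentially routine; the only point requiring genuine care is the uniformity claim underlying part (a), where one must justify that every fiber of the linear map has the same size via the coset structure of $\ker(A)$, and keep careful track of the base-$2$ logarithm so that a uniform distribution over $2^k$ values contributes exactly $k$ bits of entropy.
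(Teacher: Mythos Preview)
Your proof is correct and follows essentially the same route as the paper. For part (b) your splitting $Z = A|_S\,y_S + A|_{S^c}\,Y_{S^c}$ and reduction to part (a) is exactly the paper's argument written pointwise in $y_S$; the only difference is in part (a), where the paper simply cites a known result while you supply the standard coset-counting argument directly.
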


We defer the proof of the lemma till the end of this section. Returning to \eqref{cmi_pin2}, Lemma~\ref{lem:rank} shows that $H(\textbf{L}) = \rank(L)$, and $H(\textbf{L}|X_i^n) = \rank(L|_{\cE_i^c})$, where 
$$
\cE_i^c = \{e \in \cE^{(n)}: e \text{ is not incident with } i\}.
$$
Thus, \eqref{cmi_pin2} becomes
\begin{equation}
\textbf{I}(X_{\mathcal{M}}^n | \textbf{L}) = \frac{nm}{2} - \rank(L) + \frac{1}{m-1} \sum_{i=1}^m \rank(L|_{\cE_i^c}). \label{cmi_pin3}
\end{equation}

As the final step in our processing of $\textbf{I}(X_{\mathcal{M}}^n | \textbf{L})$, we derive a lower bound on the last term of \eqref{cmi_pin3}. Let $t = \rank(L)$, and let $T = \{e_1,e_2,\ldots,e_t\} \subseteq \cE^{(n)}$ be a subset of the columns of $L$ that form a basis for its column space. We then have
\begin{align}
\sum_{i=1}^m \text{rank}(L|_{\cE_i^c}) 
                 & \geq \sum_{i=1}^m |T \cap \cE_i^c| \nonumber \\
                 & = \sum_{i=1}^m \sum_{\ell=1}^t \mathbb{I}_{\cE_i^c}(e_\ell) \nonumber \\
                 & = \sum_{\ell=1}^t \sum_{i=1}^m \mathbb{I}_{\cE_i^c}(e_\ell) \nonumber \\
                 & = \sum_{\ell=1}^t (m-2) \label{eval:c} \\
                 & = (m-2) \text{rank}(L) \label{eval:d}
\end{align}
where $\mathbb{I}_{\cE_i^c}(e_\ell)$ equals $1$ if $e_\ell\in \cE_i^c$, and equals $0$ otherwise. The equality in \eqref{eval:c} is due to the fact that any edge $e_{\ell}$ is incident on exactly two vertices, and hence, is \emph{not} incident on exactly $m-2$ vertices $i \in \cM$. Plugging \eqref{eval:d} back into \eqref{cmi_pin3}, we obtain
\begin{equation}
\textbf{I}(X_{\mathcal{M}}^n | \textbf{L}) \ge \frac{nm}{2} - \frac{1}{m-1} \rank(L). \label{cmi_pin4}
\end{equation}

We are now in a position to compute $\LCI^r(X_{\cM})$ using \eqref{LCI_ineq1}. The upper bound gives us $\LCI^r(X_{\cM}) \le \binom{m}{2}$ for all $r \ge m$. For the lower bound, let $\textbf{L}$ be any $\LCI_W$ so that (by the linear analogue of) Definition~\ref{def:CI}, for any $\epsilon > 0$, we have $\frac{1}{n} \textbf{I}(X_{\mathcal{M}}^n | \textbf{L}) \le \frac{\epsilon}{m-1}$ for all sufficiently large $n$. The bound in \eqref{cmi_pin4} now yields $\frac{m}{2} - \frac{1}{n(m-1)} \rank(L) \le \frac{\epsilon}{m-1}$, or equivalently, $\frac{1}{n} \rank(L) \ge \binom{m}{2} - \epsilon$ for all sufficiently large $n$. Thus, for any $\epsilon> 0$, we have $\frac{1}{n} H(\textbf{L}) \ge \binom{m}{2} - \epsilon$ for all sufficiently large $n$. Hence, from Definition~\ref{def:CIrate}, it follows that $\LCI_W(X_{\cM}) \ge \binom{m}{2}$. Consequently, from the upper and lower bounds in \eqref{LCI_ineq1}, we obtain $\LCI^r(X_{\cM}) = \binom{m}{2}$ for all $r \ge m$.  

From \eqref{LCI_ineq2}, we now have
$
LR_{\SK}^r \ge \binom{m}{2} - m/2 = m(m-2)/2
$
for all $r \ge m$. Together with \eqref{LR_pin_ineq}, this yields $LR_{\SK}^r = m(m-2)/2$ for all $r \ge m$, and hence, $LR_{\SK} = m(m-2)/2$ as well. This completes the proof of Theorem~\ref{th:CPIN}, modulo the proof of Lemma~\ref{lem:rank}, which we give below.

\emph{Proof of Lemma~\ref{lem:rank}}\footnote{The authors would like to thank Shashank Vatedka for the proof of part~(b) of the lemma.}: Part~(a) follows immediately from \cite[Theorem 7.3]{NCT06}. For part~(b), we introduce some notation: for $S \subseteq \{1,2,\ldots,p\}$, let $\tilde{Y}_S = (\tilde{Y}_1,\ldots,\tilde{Y}_p)$ be defined by setting $\tilde{Y}_i = Y_i$ if $i \in S$, and $\tilde{Y}_i = 0$ otherwise. Then,
\begin{align*}
H(Z | Y_S) & = H(Z + A\tilde{Y}_S \mid Y_S) \\
& = H(A(Y+\tilde{Y}_S) \mid Y_S) \\
& = H(A\tilde{Y}_{S^c} \mid Y_S) \\
& = H(A|_{S^c} Y_{S^c} \mid Y_S) \\
& = H(A|_{S^c} Y_{S^c}) \\
& = \rank(A|_{S^c})
\end{align*}
by part~(a) of the lemma.
\endproof

\section{Concluding Remarks} \label{sec:conc}

We regard the work presented in this paper as the first step towards characterizing a rate region for the communication needed to generate a maximal-rate SK. We have given lower bounds on the \emph{total sum rate}, i.e., the sum of the rates of communication from all terminals. A next step would be to find bounds on \emph{partial sum rates}, i.e., the sum of the rates of communication from a subset of the terminals in $\cM$. We expect that such bounds will be needed to characterize the communication rate region, analogous to that in the distributed source coding (Slepian-Wolf) problem of information theory. 

Another important open problem is to find computable characterizations of the rates $\CI^r(X_{\mathcal{M}})$ and $\CI_W(X_{\cM})$. At the very least, we would like to be able to explicitly evaluate these rates in some special cases, such as in the PIN model. We expect that the linear setting results of Section~\ref{sec:linear} should be easily extendable to a wider class of PIN models. 

Finally, we remark that the bounding technique used in Theorem~\ref{th:main} can also be used to get bounds on the minimum rate of communication needed to generate maximal-rate private keys (as defined in \cite{CN04}) and maximal-rate keys when some terminals are silent (as defined in \cite{GA10}).

\end{document}